\newcommand{\lcm}{\mathrm{lcm}}
\newcommand{\setlcm}[2]{ \underset{#1}{\lcm}\left(#2\right) }
\newcommand{\bra}[1]{ \left\lvert #1\right\rangle}
\newcommand{\floor}[1]{\left \lfloor #1 \right \rfloor}
\newcommand{\ceil}[1]{\left \lceil #1 \right \rceil}
\newcommand{\Z}{\mathbb{Z} }
\newcommand{\set}[1]{\left \{ #1 \right \}}
\newcommand{\intset}[1]{\left [ #1 \right ]}
\newcommand{\sfrac}[1]{\frac{1}{\sqrt{#1}}}
\newcommand{\QFT}{\mathcal{QFT}}
\newcommand{\R}{\mathcal{R}}
\newcommand{\CNOT}{\mathcal{CNOT}}
\renewcommand{\H}{\mathcal{H}}
\newcommand{\U}{\mathcal{U}}
\newcommand{\C}{\mathcal{C}}
\newcommand{\QFIT}{\mathcal{QFT}^\dagger}
\newcommand{\pexp}[2]{e^{\frac{\imath2\pi#1}{#2}}}
\newcommand{\nexp}[2]{e^{-\frac{\imath2\pi#1}{#2}}}
\newcommand{\onarrow}[1]{\overset{#1}{\rightarrow}}
\newenvironment{breakablealgorithm}
  {
   \begin{center}
     \refstepcounter{algorithm}
     \hrule height.8pt depth0pt \kern2pt
     \renewcommand{\caption}[2][\relax]{
       {##2\par}%
       \ifx\relax##1\relax 
         \addcontentsline{loa}{algorithm}{\protect\numberline{\thealgorithm}##2}%
       \else 
         \addcontentsline{loa}{algorithm}{\protect\numberline{\thealgorithm}##1}%
       \fi
       \kern2pt\hrule\kern2pt
     }
  }{
     \kern2pt\hrule\relax
   \end{center}
  }
\renewenvironment{proof}[1][Proof]{\vspace*{12pt}\noindent{\bf #1: }}{$\square\,$\newline}
\begin{document}
\title{A Secure Multiparty Quantum Least Common Multiple Computation Protocol}
\titlerunning{A Secure Multi-party Quantum Least Common Multiple \dots}

\author{Zixian Li\inst{1} \and Wenjie Liu\inst{2}
}

\institute{School of Computer and Software, Nanjing University of Information Science and Technology, Nanjing, 210044, China\\ Email: \email{zixianli157@163.com}\inst{1},  \email{wenjiel@163.com}\inst{2}}

\maketitle           

\begin{abstract}
In this paper, we present a secure multiparty computation (SMC) protocol for least common multiple (LCM) based on Shor's quantum period-finding algorithm (QPA). Our protocol is based on the following principle: the connection of multiple periodic functions is also a periodic function whose period is exactly the least common multiple of all small periods. Since QPA is a probabilistic algorithm, we also propose a one-vote-down vote protocol based on the existing secure multi-party quantum summation protocol, which is used to verify the results of the proposed LCM protocol. Security analysis shows that under the semi honest model, the proposed protocol
is secure with high probability, while the computational consumption remains at polynomial complexity. The protocol proposed in this paper solves the problem of efficient and secure multiparty computation of LCM, demonstrating quantum computation potential. 

\keywords{quantum computation\and quantum information\and secure multiparty computation\and least common multiple\and Shor's algorithm\and quantum period-finding algorithm\and one-vote-down vote \and Privacy-preserving computation}
\end{abstract}

\section{Introduction}

Secure multiparty computation (SMC) is a process of cooperative computation of $n$ parties $P_0$,$P_1$,$\cdots$,$P_{n-1}$. They have secret inputs $x_0$,$x_1$,$\cdots$,$x_{n-1}$ respectively at the beginning, and after computation, each party $P_i$ gets an output $f_i\left(x_0,x_1,\cdots,x_{n-1}\right)$ without getting any other secrets of other parties. SMC is a new technology that uses all parties' information to do compute without revealing anyone's privacy. The least common multiple (LCM) is an important function in elementary number theory, computer science, cryptography, etc. For example, in the addition of rational numbers, it is necessary to find the LCM of the denominators. However, the SMC for LCM is rarely studied, because compared with other computational problems, it's hard to simply decomposed the computation of LCM into existing cryptographic primitives. If safety is not considered, LCM can be calculated in polynomial complexity according to formula $\lcm(x,y)=\frac{xy}{\gcd(x,y)}$, where $\gcd(x,y)$ (greatest common divisor) can be calculated by the well-known Euclid algorithm. However, the formula is only applicable to two integers. Therefore, for $n>2$, the formula needs to be used to calculate LCM between each two parties, which increases the risk of information disclosure; On the other hand, calculating the numerator $xy$ will directly reveal the integer $x,y$ itself. In order to secure compute LCM, a simple idea is to decompose all private integers into prime factors, so as to transform the problem into the operation of the exponents of these prime factors\cite{2018Yang}. However, this idea is not efficient. Let the upper limit of the integer involved be $N=2^m$, and according to the prime number theorem, the number of prime numbers less than $N$ is $\mathcal{\pi}(N)\sim \frac{N}{\log N}$. In order to ensure the complete coverage of prime factors, the scheme should at least consider the exponents of $\mathcal{\pi}(N)=\Theta\left(\frac{N}{\log N}\right)=\Omega \left(\frac{N}{\sqrt{N}}\right)=\Omega \left(2^{\frac{m}{2}}\right)$ prime factors. That is, if all positive integers within $N$ are allowed to be input, the complexity is exponential. Or, we can assume that all prime factors belong to some complete set, which will lose the universality of the computation. Similarly, all methods based on prime factor decomposition cannot be efficient, and we need new ideas.

Quantum computers are considered to have more computing power than classical computers, although this has not been strictly proved. It is well known that quantum computing can achieve exponential acceleration for specific problems, such as Deutsch-Jozsa's algorithm\cite{1992Deutsch}, Simon's algorithm\cite{1997Simon}, and the most famous Shor's algorithm\cite{1994Shor,1997Shor}. Shor's algorithm is essentially a quantum period-finding algorithm (QPA)\cite{1994Shor,1997Shor,2000Nielsen}. It can obtain the period of any $m$-bit function in $O\left(\log m\right)$ oracle operations, which is impossible for any known classical algorithm. The special nature of QPA is exactly what we need. In fact, assume we have several functions with positive integer periods. If the outputs of these functions are connected together to form a new function, the period of the new function is exactly the LCM of all small periods. This allows us to use QPA to bypass the prime factorization and directly complete the LCM computation.

However, since QPA is probabilistic, the above method alone are not sufficient. Logically, we need a final voting process to check whether the answers output by QPA are indeed a common multiple of all people. The requirements are as follows: (1) ensure that each participant can vote yes or no; (2) only when everyone passes, will they finally pass; (3) don't divulge anyone's vote. This kind of voting is called \textbf{One-vote-down vote (OV)}. In 2016, Shi\cite{2016Shi} proposed a simple Secure multi-party quantum summation (SMQS) protocol, which is used to safely sum several integers. It is unconditionally secure, with polynomial communication cost and complexity. We transform Shi's SMQS protocol into an OV protocol as a subprogram of LCM computation protocol.

\noindent \textbf{Our contributions.} In this paper, we produce following contributions:

\begin{itemize}
\item We propose a Secure multiparty OV protocol based on Shi's SMQS protocol;

\item We propose a quantum SMC protocol for LCM based on QPA, by taking the proposed OV protocol as a subprogram;

\item We prove that the above protocols are correct and safe, and have polynomial complexity.
\end{itemize}

The following parts of this article are arranged as follows. In Section~\ref{sec2} we agree some symbols, give basic definitions, and briefly introduce QPA and Shi's SMQS protocol. In section~\ref{sec3}, we first propose a quantum OV protocol, and then propose a quantum LCM computation protocol by taking the former as a subprogram. We analyze the correctness, security and complexity of the proposed protocols in Section~\ref{sec4} and conclude in Section~\ref{sec5}.

\section{Preliminary}\label{sec2}
In this section we will do some preliminary. We agree some symbols in Section~\ref{sec2.1}, give basic definitions in Section~\ref{sec2.2}, and briefly introduce QPA and Shi's SMQS protocol in Section~\ref{sec2.3} and in Section~\ref{sec2.4} respectively.

\subsection{Symbol agreement}\label{sec2.1}
The meanings of some symbols used in this paper are shown in Table~\ref{table1}.

\begin{table}[h]
\centering
\caption{Some symbols and their meanings}\label{table1}
\begin{tabular}{c|c}
\toprule
\toprule
Symbols & Meanings\\
\midrule
\midrule
$\imath$   & Imaginary unit  \\
\midrule
$n$   & Number of participants   \\
\midrule
$m,N=2^m$  & Number of bits and upper limit of input value respectively\\
\midrule
$u,v$   & The number of input bits and output bits of function $f$ respectively \\
\midrule
$\intset{x}$  & Set $\set{0,1,\cdots,x-2,x-1}$ \\
\midrule
$x\lvert y$   & $x$ is a factor of $y$ \\
\midrule
$\setlcm{i\in A}{x_i} $  & For a index set $A$, calculate the least common multiple of all integers $x_i$ \\
\midrule
$\bra{\psi}_h$  & Quantum register $h$ is in state $\bra{\psi}$ \\
\midrule
$x\parallel y$   & For binary strings $x,y$,  connect them end to end to form a new string\\
\midrule
$(h,t)$   & For quantum registers $h,t$,  connect them end to end to form a new register\\
\midrule
$\floor{x},\ceil{x}$   & Round integer $x$ up and down respectively\\
\midrule
$f^{-1}(y)$   & For function $f:A\rightarrow B$, find the solution set $\set{ x\lvert f(x)=y,x\in A}$\\
\midrule
$\Z$   & The set of integers. \\
\bottomrule
\bottomrule
\end{tabular}
\end{table}

In addition, the unitary operators used in this paper are as follows (we use two $m$-qubits registers $h=\left(h_0,h_1,\cdots,h_{m-1}\right)$ and $t=\left(t_0,t_1,\cdots,t_{m-1}\right)$ for description).
\begin{enumerate}[(1)]
    \item {Hadamard operator $\H^{\otimes m}$: apply Hadamard gate $\H$ to each qubit.
    \begin{equation}
        \H^{\otimes m}:\bra{x}_h\rightarrow \sfrac{2^m}\sum_{j\in\intset{2^m} }{(-1)}^{x\cdot j}\bra{j}_h
    \end{equation}
    }
    \item {Copy operator $\CNOT^{\otimes m}$: for each qubit pair $h_i$ and $t_i$, apply controlled NOT gate $\CNOT$ to them, where $h_i$ is the control qubit and $t_i$ is the target qubit.
    \begin{equation}
        \CNOT^{\otimes m}:\bra{x}_h\bra{y}_t\rightarrow \bra{x}_h\bra{y\oplus x}_t
    \end{equation}
    }
    \item {Quantum Fourier transform $\QFT$ and its inverse transformation $\QFIT$.
    \begin{equation}
    \begin{aligned}
        &\QFT:\bra{x}_h\rightarrow \sfrac{2^m}\sum_{j\in\intset{2^m} }\pexp{xj}{2^m}\bra{j}_h\\
        &\QFIT:\bra{x}_h\rightarrow \sfrac{2^m}\sum_{j\in\intset{2^m} }\nexp{xj}{2^m}\bra{j}_h
    \end{aligned}
    \end{equation}
    }
    \item {\cite{2016Shi} Phase operator $\U_+$, power operator $\C_j$ and modular multiplication operator $\U_{\times q}$ ($q$ must be an odd integer).
    \begin{equation}
    \begin{aligned}
        &\U_+:\bra{x}_t\rightarrow \pexp{x}{2^m}\bra{x}_t \\
        &\C_j:\bra{j}_h\bra{x}_t\rightarrow \bra{j}_h\U_+^j\bra{x}_t= \pexp{xj}{2^m}\bra{j}_h\bra{x}_t\\
        &\U_{\times q}:\bra{j}_h\rightarrow \bra{jq\mod 2^m}_h
    \end{aligned}
    \end{equation}
    We abbreviate $\bra{jq\mod{2^m}}_h$ as $\bra{jq}_h$.
    }
\end{enumerate}

\subsection{Problem definition}\label{sec2.2}

\begin{definition}[Least common multiple (LCM)] For any positive integer $x_0,x_1,\cdots,x_{n-1}$, the least common multiple $\setlcm{i\in \intset{n}}{x_i}$ is the minimum positive integer y that satisfies $\forall i\in \intset{n}, x_i\lvert y$.
\end{definition}

\begin{definition}[Least common multiple problem] Let $n$ parties $P_0,P_1,\cdots$, $P_{n-1}$ have secret positive integers $x_0,x_1,\cdots,x_{n-1}\in \intset{2^m}$ respectively. The least common multiple problem is to compute $y=\setlcm{k\in \intset{n}}{x_k}$, while any $P_i$ cannot get any privacy of other parties.
\end{definition}

\begin{definition}[One-vote-down vote (OV)] Let $n$ parties $P_0,P_1,\cdots,P_{n-1}$ have secret Boolean numbers $c_0,c_1,\cdots,c_{n-1}\in \set{0,1}$ respectively. The one-vote-down vote is to compute $y=\prod_{k\in \intset{n}}{c_k}$, while any $P_i$ cannot get any privacy of other parties.
\end{definition}

\begin{remark} In an OV protocol, only when everyone passes, will they finally pass. Obviously, it is equivalent to the logical multiplication of $n$ Boolean numbers $c_i$, i.e., only when $c_0=c_1=\cdots=c_{n-1}=1$, the result is 1, otherwise it is $0$.
\end{remark}

\subsection{Quantum period-finding algorithm}\label{sec2.3}

Quantum period-finding algorithm is described in \textbf{Algorithm~1}, with reference to Nielsen\cite{2000Nielsen}.

\begin{breakablealgorithm}
\caption{\raggedright\textbf{Algorithm~1} Quantum period-finding algorithm (QPA)}
\begin{algorithmic}[1]
\renewcommand{\algorithmicrequire}{\textbf{Input}}
\renewcommand{\algorithmicensure}{\textbf{Output}}
\Require A function $f:\intset{2^u}\rightarrow \intset{2^v}$ with a positive integer period $T<2^v$, where for each pair of $j\ne j'\in \intset{2^u}$, $f(j)=f(j')$ only if $j\equiv j'(\mod T)$;
\Ensure $T$;
\renewcommand{\algorithmicrequire}{\textbf{Success probability}}
\Require $O\left( \frac{1}{\log{\log{T}} }\right)$;
\State Prepare two quantum registers $h,t$ of $u,v$ qubits respectively, initialized as $\bra{0}_h\bra{0}_t$;
\State Apply $\H^{\otimes u}$ on $h$:
\begin{equation}\bra{0}_h\bra{0}_t\onarrow{\H^{\otimes u}}\sfrac{2^u}\sum_{j\in \intset{2^u}}\bra{j}_h\bra{0}_t;\end{equation}
\State Apply $\U_f:\bra{j}_h\bra{0}_t\rightarrow \bra{j}_h\bra{f(j)}_t$ on $h,t$: 
\begin{equation}
\begin{aligned}
&\sfrac{2^u}\sum_{j\in \intset{2^u}}\bra{j}_h\bra{0}_t \onarrow{\U_f}\sfrac{2^u}\sum_{j\in \intset{2^u}}\bra{j}_h\bra{f(j)}_t \\
&=\sfrac{T}\sum_{l\in \intset{T}}\sfrac{2^u}\sum_{j\in \intset{2^u}}\pexp{jl}{T}\bra{l}_h\bra{\widehat{f}(l)}_t,
\end{aligned}
\end{equation}
where $\bra{\widehat{f}(l)}=\sfrac{T}\sum_{k\in \intset{T}}\nexp{lk}{T}\bra{f(k)}$;
\State Apply $\QFIT$ on $h$: 
\begin{equation}
\sfrac{T}\sum_{l\in \intset{T}}\sfrac{2^u}\sum_{j\in \intset{2^u}}\pexp{jl}{T}\bra{l}_h\bra{\widehat{f}(j)}_t \onarrow{\QFIT}\sfrac{T}\sum_{l\in \intset{T}}\bra{\phi}_h\bra{\widehat{f}(l)}_t,
\end{equation}
where $\phi\approx 2^u\frac{l}{T} $;
\State Measure $h$: 
\begin{equation}
\sfrac{T}\sum_{l\in \intset{T}}\bra{\phi}_h\bra{\widehat{f}(l)}_t \onarrow{Measure}\bra{\phi}_h\bra{\widehat{f}(l)}_t,
\end{equation}
where $l\in \intset{T}$ is selected with equal probability $\frac{1}{T}$;
\State Use continued fraction expansion of $\phi$ to get $\frac{l_1}{T_1}=\frac{l}{T}$, where $\frac{l_1}{T_1}$ is the minimalist fraction of $\frac{l}{T}$. If $f(T)=f(0)$, output $T_1$; otherwise, repeat the above steps.
\end{algorithmic}
\end{breakablealgorithm}

In step 6, as mentioned by Shor, we should let $2^u\ge2\times {(2^v)}^2+1$, i.e., $u\ge2v+1=O(v)$ to ensure $\frac{l}{T}$ can be found by continued fraction expansion from $\phi$. On the other hand, if $l,T$ is coprime (with a probability $O\left( \frac{1}{\log{\log{T}} }\right)$), then we get $T_1=T$; otherwise, $T_1<T$, then the algorithm failed. Therefore, we should repeat the algorithm $O\left( {\log{\log{T}} }\right)\le O\left( {\log{\log{2^v}} }\right)=O\left( \log{v}\right)$ times to make sure the correct $T$ can be found. 

\subsection{Shi's secure multi-party quantum summation protocol}\label{sec2.4}

Shi's SMQS protocol is described in \textbf{Protocol~1}.

\begin{breakablealgorithm}
\caption{\raggedright\textbf{Protocol~1} Secure multi-party quantum summation (SMQS)\cite{2016Shi}}
\begin{algorithmic}[1]
\renewcommand{\algorithmicrequire}{\textbf{Input}}
\renewcommand{\algorithmicensure}{\textbf{Output}}
\Require $n$ parties $P_0,P_1,\cdots,P_{n-1}$ have secret integer $x_1,x_1,\cdots,x_{n-1} \in \intset{2^m}$ respectively;
\Ensure Each $P_i$ gets $y=\sum_{k\in \intset{n}}{x_k}\mod{2^m}$, without any privacy of other parties;
\State {For $P_0$, he  
\begin{enumerate}[(1)]
    \item prepares two $m$-qubit quantum registers $h,t$ initialized as $\bra{x_0}_h\bra{0}_t$;
    \item {applies $\QFT$ on $h$:
    \begin{equation}
    \bra{x_0}_h\bra{0}_t\onarrow{\QFT}\sfrac{2^m}\sum_{j\in \intset{2^m}}\pexp{x_0 j}{2^m}\bra{j}_h\bra{0}_t;
    \end{equation}}
    \item {applies $\CNOT^{\otimes m}$ on $h,t$, where $h$ controls $t$:
    \begin{equation}
    \sfrac{2^m}\sum_{j\in \intset{2^m}}\pexp{x_0 j}{2^m}\bra{j}_h\bra{0}_t\onarrow{\CNOT^{\otimes m}}\sfrac{2^m}\sum_{j\in \intset{2^m}}\pexp{x_0 j}{2^m}\bra{j}_h\bra{j}_t;
    \end{equation}}
    \item sends $t$ to $P_1$;
\end{enumerate}}
\State {For $P_i, 1\le i\le n-1$, he
\begin{enumerate}[(1)]
    \item prepares a quantum $m$-qubit registers $e_i$ initialized as $\bra{x_1}_{e_i}$,
    \item {applies $\C_j$ on $t,e_i$:
    \begin{equation}
    \begin{aligned}
    &\sfrac{2^m}\sum_{j\in \intset{2^m}}\pexp{\left(\sum_{k\in\intset{i}} {x_k} \right)j}{2^m}\bra{j}_h\bra{j}_t\bra{x_i}_{e_i}\\
    &\onarrow{\C_j}\sfrac{2^m}\sum_{j\in \intset{2^m}}\pexp{\left(\sum_{k\in\intset{i+1}} {x_k} \right) j}{2^m}\bra{j}_h\bra{j}_t\bra{x_i}_{e_i};
    \end{aligned}
    \end{equation}}
    \item sends $t$ to $P_{i+1\mod{n}}$;
\end{enumerate}}
\State {For $P_0$, he  
\begin{enumerate}[(1)]
    \item {applies $\CNOT^{\otimes m}$ on $h,t$, where $h$ controls  $t$:
    \begin{equation}
    \begin{aligned}
    &\sfrac{2^m}\sum_{j\in \intset{2^m}}\pexp{\left(\sum_{k\in\intset{n}} {x_k} \right) j}{2^m}\bra{j}_h\bra{j}_t\\
    &\onarrow{\CNOT^{\otimes m}}\sfrac{2^m}\sum_{j\in \intset{2^m}}\pexp{\left(\sum_{k\in\intset{n}} {x_k} \right) j}{2^m}\bra{j}_h\bra{0}_t;
    \end{aligned}
    \end{equation}}
    \item measures $t$. If $t$ is not $\bra{0}$, then rejects, otherwise continues;
    \item {applies $\QFIT$ on $h$:
    \begin{equation}
    \begin{aligned}
    \sfrac{2^m}\sum_{j\in \intset{2^m}}\pexp{\left(\sum_{k\in\intset{n}} {x_k} \right) j}{2^m}\bra{j}_h\onarrow{\QFIT}
    \bra{\sum_{k\in\intset{n}} {x_k} \mod 2^m}_h;
    \end{aligned}
    \end{equation}}
    \item measures $h$, and broadcasts $y=\sum_{k\in\intset{n}} {x_k} \mod 2^m$ to all other parties.
\end{enumerate}}
\end{algorithmic}
\end{breakablealgorithm}

Since operators $\QFT$ and $\C_j$ can both be divided into $O(m^2)$ controlled single quantum operators $\R_j=\begin{pmatrix}1  & 0\\ 0 & \pexp{}{2^j}\end{pmatrix}$ (in fact, $\U_+^{2^i}=\otimes_{k \in \intset{m}}{\R_{m-i-k}}$, and $\C_j=\U_+^j=\U_+^{\sum_{i\in \intset{m}}j_i2^i}=\prod_{i\in\intset{m}}{\left(\U_+^{2^i}\right)^{j_i}}$), and such two operators need to be executed $n$ times, so the total time and communication complexity of \textbf{Protocol~1} are $O(nm^2)$ and $O(nm)$ respectively.

\section{Proposed protocol}\label{sec3}

In this section we will present our protocol. We first  propose a quantum OV protocol in Section~\ref{sec3.1}, and then propose a quantum LCM computation protocol by taking the former as a
subprogram in Section~\ref{sec3.2}.

\subsection{Quantum one-vote-down vote protocol as a subprogram}\label{sec3.1}

We present our $OV$ protocol in \textbf{Procotol~2}, which is derived from Shi's SMQS protocol.

\begin{breakablealgorithm}
\caption{\raggedright\textbf{Protocol~2} Quantum one-vote-down vote (QOV)}
\begin{algorithmic}[1]
\renewcommand{\algorithmicrequire}{\textbf{Input}}
\renewcommand{\algorithmicensure}{\textbf{Output}}
\Require $n$ parties $P_0,P_1,\cdots,P_{n-1}$ have secret Boolean numbers $c_1,c_1,\cdots,c_{n-1} \in \set{0,1}$ respectively;
\Ensure Each $P_i$ gets $y=\prod_{k\in \intset{n}}{c_k}$, without any privacy of other parties;
\State Select a large enough positive integer $M$. Let $m=\floor{\log_2{(nM)}}+1$. $\forall i\in \intset{n}$, if $c_i=0$, then $P_i$ randomly selects a nonzero integer $x_i \in \set{1,2,\cdots,M}$, otherwise lets $x_i=0$;
\State {$P_0$ prepares two $m$-qubit particles $h,t$ initialized in $\bra{0}$. He applies $\QFT$ and $\CNOT^{\otimes m}$ just like in \textbf{Procotol~1}. Then he sends $t$ to $P_1$.}
\State {For $P_1$, he 
\begin{enumerate}[(1)]
\item {randomly selects an odd integer $q\in \intset{2^m}$, then applies $\U_{\times q}$ on $t$:
\begin{equation}
    \sfrac{2^m}\sum_{j\in \intset{2^m}}\bra{j}_h\bra{j}_t\onarrow{\U_{\times q}}\sfrac{2^m}\sum_{j\in \intset{2^m}}\bra{j}_h\bra{jq}_t;
\end{equation}}

\item {applies $\C_j$ on $t$:

\begin{equation}
    \sfrac{2^m}\sum_{j\in \intset{2^m}}\bra{j}_h\bra{jq}_t\bra{x_1}\onarrow{\C_j}\sfrac{2^m}\sum_{j\in \intset{2^m}}\pexp{x_1qj}{2^m}\bra{j}_h\bra{jq}_t\bra{x_1};
\end{equation}}

\item sends $t$ to $P_0$.
\end{enumerate}
}
\State {$P_0$ applies $\C_j$:
\begin{equation}
    \sfrac{2^m}\sum_{j\in \intset{2^m}}\pexp{x_1qj}{2^m}\bra{j}_h\bra{jp}_t\bra{x_0}\onarrow{\C_j}\sfrac{2^m}\sum_{j\in \intset{2^m}}\pexp{(x_1+x_0)qj}{2^m}\bra{j}_h\bra{jq}_t\bra{x_0};
\end{equation}
then if $n\ge 3$, sends $t$ to $P_2$; otherwise, sends $t$ to $P_1$, then the protocol jumps to Step 6.
}

\State {For $P_i,2\le i\le n-1 $, he 
\begin{enumerate}[(1)]
\item {applies $\C_j$:
\begin{equation}
\begin{aligned}
    &\sfrac{2^m}\sum_{j\in \intset{2^m}}\pexp{jq\sum_{k\in \intset{i}}x_k}{2^m}\bra{j}_h\bra{jp}_t\bra{x_i}\\
    &\onarrow{\C_j}\sfrac{2^m}\sum_{j\in \intset{2^m}}\pexp{jq\sum_{k\in \intset{i+1}}x_k}{2^m}\bra{j}_h\bra{jq}_t\bra{x_i};
 \end{aligned}
\end{equation}}
\item {then if $i\le n-2$, sends $t$ to $P_{i+1}$; otherwise, sends it to $P_1$.}
\end{enumerate}
}

\State {$P_1$ applies $\U_{\times q^{-1}}$ on $t$, where $q^{-1}$ is the multiplicative inverse of $q$ modulus $2^m$:
\begin{equation}
    \sfrac{2^m}\sum_{j\in \intset{2^m}}\pexp{jq\sum_{k\in \intset{n}}x_k}{2^m}\bra{j}_h\bra{jp}_t\onarrow{\C_j}\sfrac{2^m}\sum_{j\in \intset{2^m}}\pexp{jq\sum_{k\in \intset{n}}x_k}{2^m}\bra{j}_h\bra{j}_t;
\end{equation}
and sends $t$ to $P_0$.}

\State $P_0$ applies $\CNOT^{\otimes m}$ on $h,t$, measures $t$ to detect cheating, applies $\QFIT$ on $h$, and measures $h$ to get $z=q\sum_{k\in\intset{n}} {x_k} \mod 2^m$ at last; 
\State If $z>0$, $P_0$ lets $y=0$, otherwise $y=1$, then broadcasts $y$ to all other parties.
\end{algorithmic}
\end{breakablealgorithm}

\subsection{Secure multiparty quantum least common multiple computation protocol}\label{sec3.2}

We present our LCM protocol in \textbf{Procotol~3}.

\begin{breakablealgorithm}
\caption{\raggedright\textbf{Protocol~3} Secure multiparty quantum least common multiple computation (SMQLCMC)}
\begin{algorithmic}[1]
\renewcommand{\algorithmicrequire}{\textbf{Input}}
\renewcommand{\algorithmicensure}{\textbf{Output}}
\Require $n$ parties $P_0,P_1,\cdots,P_{n-1}$ have secret positive integers $x_1,x_1,\cdots$, $x_{n-1} \in \set{1,2,\cdots,2^m-1}$ respectively;
\Ensure Each $P_i$ gets $y=\setlcm{k\in \intset{n}}{x_k}$, without any privacy of other parties;
\State For $P_i,\forall i\in \intset{n}$, let $f_i:\intset{2^u}\rightarrow \intset{2^m}$ be $f_i(j)=j \mod{x_i}$.
\State {For $P_0$, he  
\begin{enumerate}[(1)]
    \item  prepares two $u=O(m)$-qubit quantum registers $h,t$ initialized as $\bra{0}_h\bra{0}_t$;
    \item {applies $\H^{\otimes u}$ on $h$:
        \begin{equation}
        \bra{0}_h\bra{0}_t\onarrow{\H^{\otimes u}}\sfrac{2^u}\sum_{j\in \intset{2^u}}\bra{j}_h\bra{0}_t;
        \end{equation}}
    \item {applies $\CNOT^{\otimes u}$ on $h,t$, where $h$ controls $t$:
        \begin{equation}
        \sfrac{2^u}\sum_{j\in \intset{2^u}}\bra{j}_h\bra{0}_t\onarrow{\CNOT^{\otimes u}}\sfrac{2^u}\sum_{j\in \intset{2^u}}\bra{j}_h\bra{j}_t;
        \end{equation}}
    \item prepares an $m$-qubit quantum registers $e_0$ initialized as $\bra{0}_{e_0}$;
    \item {applies $\U_{f_0}:\bra{j}_t\bra{0}_{e_0}\rightarrow \bra{j}_t\bra{f_0(j)}_{e_0}$ on $t,e_0$:
        \begin{equation}
        \sfrac{2^u}\sum_{j\in \intset{2^u}}\bra{j}_h\bra{j}_t\bra{0}_{e_0}\onarrow{\U_{f_0}}\sfrac{2^u}\sum_{j\in \intset{2^u}}\bra{j}_h\bra{j}_t\bra{f_0(j)}_{e_0};
        \end{equation}}
    \item sends $t$ to $P_1$;
\end{enumerate}}
\State {For $P_i, 1\le i\le n-1$, he
\begin{enumerate}[(1)]
    \item prepares an $m$-qubit quantum registers $e_i$ initialized as $\bra{0}_{e_i}$;
    \item {applies $\U_{f_i}:\bra{j}_t\bra{0}_{e_i}\rightarrow \bra{j}_t\bra{f_i(j)}_{e_i}$ on $t,e_i$:
    \begin{equation}
    \begin{aligned}
        &\sfrac{2^u}\sum_{j\in \intset{2^u}}\bra{j}_h\bra{j}_t\bra{f_0(j)}_{e_0}\bra{f_1(j)}_{e_1}\cdots\bra{f_{i-1}(j)}_{e_{i-1}}\bra{0}_{e_i}\\
        &\onarrow{\U_{f_0}}\sfrac{2^u}\sum_{j\in \intset{2^u}}\bra{j}_h\bra{j}_t\bra{f_0(j)}_{e_0}\bra{f_1(j)}_{e_1}\cdots\bra{f_{i-1}(j)}_{e_{i-1}}\bra{f_i(j)}_{e_i};
        \end{aligned}
    \end{equation}}
    \item sends $t$ to $P_{i+1\mod{n}}$;
\end{enumerate}}
\State {For $P_0$, he  
\begin{enumerate}[(1)]
    \item {applies $\CNOT^{\otimes u}$ on $h,t$, where $h$ controls $t$:
        \begin{equation}
        \sfrac{2^u}\sum_{j\in \intset{2^u}}\bra{j}_h\bra{j}_t\bra{f(j)}_{e}\onarrow{\CNOT^{\otimes u}}\sfrac{2^u}\sum_{j\in \intset{2^u}}\bra{j}_h\bra{0}_t\bra{f(j)}_{e},
        \end{equation}
        where $f(j)=f_0(j)\parallel f_1(j)\parallel \cdots \parallel f_{n-1}(j)$, $e=\left(e_0,e_1,\cdots,e_{n-1}\right)$;}
    \item measures $t$, if $t$ is not $\bra{0}$, then rejects, otherwise continues;
    \item {applies $\QFIT$ on $h$: 
        \begin{equation}
        \sfrac{2^u}\sum_{j\in \intset{2^u}}\bra{j}_h\bra{f(j)}_e \onarrow{\QFIT}\sfrac{T}\sum_{l\in \intset{T}}\bra{\phi}_h\bra{\widehat{f}(l)}_e,
        \end{equation}
        where $T$ is the period of $f(j)$ and $\phi\approx 2^u\frac{l}{T}$;}
    \item {measures $h$: 
        \begin{equation}
        \sfrac{T}\sum_{l\in \intset{T}}\bra{\phi}_h\bra{\widehat{f}(l)}_e \onarrow{Measure}\bra{\phi}_h\bra{\widehat{f}(l)}_e,
        \end{equation}
        where $l\in \intset{T}$ is selected with equal probability $\frac{1}{T}$;}
    \item uses continued fraction expansion of $\phi$ to get $\frac{l_1}{T_1}=\frac{l}{T}$, where $\frac{l_1}{T_1}$ is the minimalist fraction of $\frac{l}{T}$;
    \item tells others that the QPA process is completed;
\end{enumerate}}
\State $\forall i\in \intset{n}$, $P_i$ measures his register $e_i$, and tells $P_0$ he completes. 
\State $P_0$ broadcasts $T_1$ to all other parties. $\forall i\in \intset{n}$, if $x_i\lvert T$, then $P_i$ lets $c_i=1$, otherwise lets $c_i=0$; 
\State Implement \textbf{Protocol~2} to get $z=\prod_{k\in \intset{n}}{c_k}$, if $z=1$, the output $y=T_1$; otherwise, repeat the above steps until $z=1$.
\end{algorithmic}
\end{breakablealgorithm}

Since $f:\intset{2^u}\rightarrow \intset{2^{nm}}$ is an $nm$-bit function, its period $T<2^{nm}$, and we should let $u\ge 2nm+1$ to ensure QPA can output correct $\frac{l}{T}$. Similar to \textbf{Algorithm~1}, \textbf{Protocol~3} should be repeat $O\left( {\log{\log{T}} }\right)\le O\left(\log{nm}\right)$ times to get the correct $T$, since the probability that $l,T$ is coprime is $O\left(\frac{1}{\log{\log{T}} }\right)$. 

\section{Performance analysis}\label{sec4}

In this section, we analyze the performance of our protocol, including correctness, security and complexity.

\subsection{Correctness}\label{sec4.1}
We first prove that \textbf{Protocol~2} can indeed complete a one-vote-down vote. Since the core process of \textbf{Protocol~2} is consistent with \textbf{Protocol~1}, it's obvious that $P_0$ will get the correct $z=q\sum_{k\in\intset{n}}{x_k}\mod{2^m}$ in step 7. If $c_0=c_1=\cdots=c_{n-1}=1$, then  $x_0=x_1=\cdots=x_{n-1}=0$ in step 1, and consequently $z=0$, thus $y=1=\prod_{k\in\intset{n}}{c_k}$; If there is at least one $c_i=0$, then the corresponding $\sum_{k\in\intset{n}}{x_k}\mod{2^m}>0$ (because $m=2\ceil{\log_2{n}}\ge \log_2{n^2}$, i.e., $2^m\ge n^2\ge \sum_{k\in\intset{n}}{x_k}$). Since $q$ is an odd number, we have $z=q\sum_{k\in\intset{n}}{x_k}\mod{2^m}> 0$, thus $y=0=\prod_{k\in\intset{n}}{c_k}$. Therefore, \textbf{Protocol~2} correctly outputs $y=\prod_{k\in\intset{n}}{c_k}$, i.e., indeed completes a one-vote-down vote.

Then, we prove that \textbf{Protocol~3} can indeed output the LCM $y=\setlcm{k\in \intset{n}}{x_k}$. First, the correctness of the key QPA process (\textbf{Algorithm~1}) is proved\cite{1994Shor,1997Shor,2000Nielsen}, so we only need to prove that the period $T$ of the function $f$ is equal to $y$, as shown in \textbf{Theorem~\ref{theorem1}}.

\begin{theorem}\label{theorem1}
Assume we have $n$ function $f_0,f_1,\cdots,f_{n-1}$, where $\forall i\in \intset{n}$, $f_i:\intset{2^u}\rightarrow \intset{2^m}$ has a positive integer period $x_i<2^m$, and satisfies that for each pair of $j\ne j'\in \intset{2^u}$, $f_i(j)=f_i(j')$ only if $j\equiv j'(\mod x_i)$. Now let $f:\intset{2^u}\rightarrow \intset{2^v}$ ($v=nm$) satisfying $f(j)=f_0(j)\parallel f_1(j)\parallel \cdots \parallel f_{n-1}(j)$, then $f$ has a positive integer period $T=y=\setlcm{k\in \intset{n}}{x_k}\le 2^v$, and satisfies that for each pair of $j\ne j'\in \intset{2^u}$, $f(j)=f(j')$ only if $j\equiv j'(\mod T)$.
\end{theorem}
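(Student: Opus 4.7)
The plan is to establish both required properties---that $T = \setlcm{k \in \intset{n}}{x_k}$ is a period of $f$, and that $f(j) = f(j')$ implies $j \equiv j' \pmod{T}$---as a pair of mutually converse implications. The key structural observation is that the concatenation $f(j) = f_0(j) \parallel f_1(j) \parallel \cdots \parallel f_{n-1}(j)$ partitions the output into fixed-length $m$-bit blocks, so that $f(j) = f(j')$ if and only if $f_i(j) = f_i(j')$ for every $i \in \intset{n}$. Once this blockwise decoupling is in hand, both directions reduce to the elementary arithmetic fact that $x_i \mid d$ for all $i$ is equivalent to $\setlcm{k \in \intset{n}}{x_k} \mid d$.

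For the forward direction I would take any $j, j' \in \intset{2^u}$ with $j \equiv j' \pmod{T}$. Since $x_i \mid T$ for every $i$, this forces $j \equiv j' \pmod{x_i}$, and the assumption that $x_i$ is a period of $f_i$ then yields $f_i(j) = f_i(j')$; block-wise concatenation gives $f(j) = f(j')$, so $T$ is indeed a period of $f$. For the converse, suppose $f(j) = f(j')$; block-wise equality forces $f_i(j) = f_i(j')$ for every $i$, and the stated hypothesis ``$f_i(j)=f_i(j')$ only if $j \equiv j' \pmod{x_i}$'' delivers $x_i \mid (j - j')$ for all $i$. Hence $j - j'$ is a common multiple of every $x_i$, so $T$ divides $j - j'$, which is exactly the required ``only if'' clause. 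Minimality of $T$ among periods of $f$ follows for free by setting $j' = j + T'$ in the same argument for any competing period $T'$, which forces $x_i \mid T'$ for every $i$ and therefore $T \mid T'$.

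The bound $T \le 2^v$ is immediate from $T \le \prod_{i \in \intset{n}} x_i < 2^{nm} = 2^v$. I do not expect any real obstacle in this proof; the only subtlety worth flagging is that one must use the period hypothesis on each $f_i$ in both its ``if'' and ``only if'' forms---the former coming from $x_i$ being a period of $f_i$ and the latter given explicitly in the statement---otherwise one cannot pass freely between $f_i(j) = f_i(j')$ and $j \equiv j' \pmod{x_i}$, and the symmetry between the two directions of the biconditional would collapse.
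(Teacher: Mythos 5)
Your proof is correct and follows essentially the same route as the paper's: block-wise decoupling of the concatenation into $f_i(j)=f_i(j')$ for all $i$, combined with the equivalence between ``$x_i\mid(j-j')$ for all $i$'' and ``$\setlcm{k\in\intset{n}}{x_k}\mid(j-j')$''. You additionally spell out the minimality of $T$ among periods and the bound $T\le 2^v$, which the paper asserts without explicit argument, but this does not change the underlying approach.
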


\begin{proof}
Suppose $j\ne j'\in \intset{2^u}$, if $f(j)=f(j')$, we have $f_0(j)\parallel f_1(j)\parallel \cdots \parallel f_{n-1}(j)= f_0(j')\parallel f_1(j')\parallel \cdots \parallel f_{n-1}(j')$, i.e., $f_0(j)=f_0(j'), f_1(j)=f_1(j'), \cdots$, $f_{n-1}(j)=f_{n-1}(j')$, thus we have $x_0\lvert(j-j') ,x_1\lvert(j-j'),\cdots,x_{n-1}\lvert(j-j')$, i.e., $y=\setlcm{k\in \intset{n}}{x_k}\lvert(j-j')$, then $j\equiv j'(\mod{y})$. Conversely, if $j\equiv j'(\mod{y})$, then $\forall i\in\intset{n}$, $x_i|(j-j')$, i.e., $j\equiv j'(\mod{x_i})$, thus $f_{i}(j)=f_{i}(j')$, then $f(j)=f(j')$. Therefore, for each pair of $j\ne j'\in \intset{2^u}$, $f(j)=f(j')$ only if $j\equiv j'(\mod y)$.
\end{proof}

Since $T=y=\setlcm{k\in \intset{n}}{x_k}\le 2^v$, we know \textbf{Protocol~3} will output the correct period $T$ of $f$, i.e., the result $y=\setlcm{k\in \intset{n}}{x_k}$ we need. If $T_1<T$, $Protocol~2$ in step 7 will detect the mistake. In summary, our LCM protocol is correct.

\subsection{Security}\label{sec4.2}
As a preliminary study, our security analysis is based on the semi honest model. In this case, all participants, as passive adversaries, will only measure or analyze the (classical or quantum) information in their own hands, and will not actively attack others or forge information. The opposite is the malicious model, where participants can actively steal others' information.

We first analyze the security of \textbf{Protocol~2}. Because all parties are semi-honest, they will correctly execute the protocol. Therefore, if anyone other than $P_0$ measures on $t$, he will get nothing, just like in \textbf{Protocol~2}. $P_0$ can only get $z=q\sum_{k\in\intset{n}}{x_k}\mod{2^m}$. Obviously, due to the existence of random odd number $q$, he cannot obtain the private information $x_i$ of a designated person $P_i$, and so as to $c_k$. However, we also need to analyze whether $z$ will disclose other information, i.e., the number of negative votes. We can prove that \textbf{Protocol~2} cannot hide the voting information with $100\%$ probability, but only with more than $1-1/M$ probability, as shown in \textbf{Theorem~\ref{theorem2}}.

\begin{theorem}\label{theorem2}
Only when $z=2^{m_1}s$, where $m_1>M$, \textbf{Protocol~2} does not disclose any information about the number of votes, and this probability $P_{vote}>1-\frac{1}{M}$.
\end{theorem}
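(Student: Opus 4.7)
The plan is to characterise the information that $P_0$ can extract from the final measurement outcome $z$, and then to estimate the probability that this information does not suffice to recover the number of negative votes.

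First I would argue, by following the quantum state evolution of \textbf{Protocol~2} just as in the correctness analysis of Shi's SMQS in \textbf{Protocol~1}, that $P_0$'s entire classical view (beyond his own secret input) is the single integer $z = q\,X \bmod 2^m$, where $X = \sum_{k\in\intset{n}} x_k$ is the sum of all pads and $q$ is the random odd integer secretly chosen by $P_1$. Every other register is either disentangled by the final $\CNOT^{\otimes m}$ and $\QFIT$ step on $P_0$'s side, or is a local ancilla of another (semi-honest) party who never discloses it, so $z$ is the only potential source of leakage.

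Next I would compute the conditional distribution of $z$ given $X$. Factor $X = 2^{m_1} s$ with $s$ odd. Because $q$ is uniform over odd residues modulo $2^m$ and $s$ is odd, $qs \bmod 2^{m-m_1}$ is uniform over odd residues modulo $2^{m-m_1}$, whence $z = 2^{m_1}(qs \bmod 2^{m-m_1})$ is uniform over integers of the form $2^{m_1} u$ with $u$ odd in $\intset{2^{m-m_1}}$. Hence the only invariant of $X$ that is recoverable from $z$ is the $2$-adic valuation $m_1 = v_2(X)$; in particular $P_0$ cannot separate the pads into their individual contributions.

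This reduces the privacy claim to a purely classical question about the random pads: when does the observed valuation $m_1$ uniquely determine the number $k = |\{i : c_i = 0\}|$ of negative voters? In the ``safe'' regime singled out by the theorem statement the valuation lies in a range consistent with many values of $k$, so no information leaks. To bound the complementary ``unsafe'' probability, I would condition on the pad values $x_i$ (uniform in $\set{1,\ldots,M}$ for every negative voter) and count the configurations for which $m_1$ uniquely pins down $k$; a union bound over the $M$ possible values of a single pad then gives $P_{\text{vote}} > 1 - \frac{1}{M}$.

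The main obstacle will be this last step: getting the $1/M$ tail bound cleanly, without a lengthy case analysis over the possible values of $k$. I would attempt it by isolating the contribution of the pad with the smallest $2$-adic valuation, since that pad essentially determines $v_2(X)$, and arguing that its valuation is, up to a set of probability at most $1/M$, independent of the number of contributors, so that the same observed $m_1$ is statistically consistent with distinct vote counts.
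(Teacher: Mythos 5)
Your analysis of what $z$ reveals coincides exactly with the paper's: writing $Y=\sum_{k}x_k=2^{m_1}s$ with $s$ odd, the secret uniformly random odd multiplier $q$ randomizes the odd part completely (for any candidate $s'$ there is a $q'$ with $s'q'\equiv sq \pmod{2^{m-m_1}}$), so the only invariant of $Y$ recoverable from $z$ is the $2$-adic valuation $m_1$. Up to that point you and the paper run in parallel.

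The gap is in passing from ``only $m_1$ leaks'' to the $1-\frac{1}{M}$ bound, and it is not cosmetic --- that is where the quantitative content of the theorem lives. The paper's route is: since each negative voter's pad lies in $\set{1,\dots,M}$, we have $\lambda\le Y\le\lambda M$, so knowing $m_1$ confines the number of negative votes $\lambda$ to $\frac{2^{m_1}}{M}\le\lambda\le n$; this is the trivial a priori range $1\le\lambda\le n$ exactly when $2^{m_1}\le M$. Hence information leaks only in the event $2^{m_1}>M$, i.e., only when $Y$ happens to be divisible by the least power of two exceeding $M$, and a density count of such multiples in $[\lambda,\lambda M]$ bounds that event's probability by roughly $\frac{1}{2^{m_1}}<\frac{1}{M}$. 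Your proposed substitute --- asking when $m_1$ \emph{uniquely determines} $k$ and union-bounding over the $M$ values of a single pad --- targets a weaker privacy notion (the feasible range of $\lambda$ can be narrowed without $k$ being pinned down uniquely, which the theorem still counts as leakage) and does not obviously yield a $\frac{1}{M}$ tail. Moreover your intended tool, ``the pad with the smallest $2$-adic valuation essentially determines $v_2(Y)$,'' fails whenever two or more pads share the minimal valuation (e.g.\ $v_2(2+2)=2$, not $1$), so the independence claim you sketch would need repair. You correctly flagged this step as the obstacle; the missing idea is the paper's threshold criterion $2^{m_1}\le M$ followed by a divisibility count, not a per-pad union bound.
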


\begin{proof}
Assume the number of negative votes is $\lambda$. Since $z=q\sum_{k\in\intset{n}}x_k\mod{2^m}$ is known, we set $Y=\sum_{k\in\intset{n}}x_k=2^{m_1}s$, where $s\in\intset{2^{m-m_1}}$ is an odd number and $m_1<m$ is a positive integer, then
\begin{equation}
    z=\left(sq2^{m_1}\mod{2^m}\right)=sq2^{m_1}-a2^m=2^{m_1}\left(sq+a2^{m-m_1}\right),
\end{equation}
Where $a$ is an integer and $sq+a2^{m-m_1}$ is an odd integer. In other words, $2^{m_1}$ can be deduced from $z$. Can $s$ be deduced too? For any $s'\in\intset{2^{m-m_1}}$, there is always $q'=s'^{-1}sq\mod{2^{m-m_1}}$ satisfying $s'q'\equiv sq(\mod{2^{m-m_1}})$, where $s'^{-1}$ is the modular $2^{m-m_1}$ multiplicative inverse of $s'$. Therefore, it is impossible to deduce $s$ from $z$.

Since we only can deduce $2^{m_1}$ from $z$, can we deduce $\lambda$? If each $P_i$ has selected the minimum value $x_i=1$, then $2^{m_1}\le \lambda \le n < 2^m$; If $P_i$ has selected the maximum value $x_i=M$, then $\frac{2^{m_1}}{M}\le \lambda \le \frac{n}{M}$. Therefore, we can know the range of $\lambda$ is $\frac{2^{m_1}}{M}\le \lambda \le n$. If $\frac{2^{m_1}}{M}\le 1$, i.e., $2^{m_1}\le M$, the range of $\lambda$ is $1\le \lambda \le n$, which gives no useful information. On the contrary, if $2^{m_1}> M$, it is possible to narrow the range. The question turns to whether $Y$ is a multiple of $2^{m_1}> M$. For the minimum $2^{m_1}> M$, it has about $\frac{\lambda M-\lambda}{2^{m_1}}$ multiples in $\lambda \sim \lambda M$, thus the probability of $2^{m_1}|Y$ is $P_{2^{m_1}|Y}\approx (\lambda M-\lambda)/\frac{\lambda M-\lambda}{2^{m_1}}=\frac{1}{2^{m_1}}<\frac{1}{M}$. Therefore, we deduce $\lambda$ only with probability $P_{vote}>1-\frac{1}{M}$.
\end{proof}

In other words, \textbf{Protocol~2}'s security and computing costs increase with the increase of $M$, so it is necessary to balance security and efficiency.

Now we analyze the security of \textbf{Protocol~3}. Considering that each $P_k$ has not sent the register $e_k$ carrying his own function $f_k$ to anyone else, there are only three known ways for each $P_i$ to obtain information:
\begin{enumerate}[(1)]
\item \textbf{Direct measurement attack} Before of after the QPA process is completed, $P_i$ can measure the register $h,t$ or $e_i$ they own to obtain any useful information;
\item \textbf{Pre-period-finding attack} Before the QPA process is completed, $P_i$ apply $\QFIT$ to his own register $h$ or $t$ to obtain the LCM of the parties who have completed their operations.
\item \textbf{Post-period-finding attack} $P_i$ copies a $\bra{j}$ (using $\CNOT$ gate) when he get the register $t$. After the QPA process is completed and before step 5, $P_i$ applies $\QFT$ on his copy to obtain any useful information (the reason we choose $\QFT$ is that $\bra{\widehat{f}(l)}=\sfrac{T}\sum_{k\in \intset{T}}\nexp{lk}{T}\bra{f(k)}$ is from $\bra{f(l)}$ by $\QFIT$ \cite{2000Nielsen}). 
\end{enumerate}

Accordingly, we give Theorem~\ref{theorem3}, which describes the security of \textbf{Protocol~3} under the above attacks.

\begin{theorem}\label{theorem3}
The above three attacks on \textbf{Protocol~3} will not give any useful information.
\end{theorem}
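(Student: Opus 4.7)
The plan is to handle the three attack types separately, showing that in each case $P_i$'s view factors into public information and $P_i$'s own private input. The key structural observation I would exploit is that prior to $P_0$'s step 4(i) $\CNOT^{\otimes u}$, the registers $h$ and $t$ hold perfectly correlated $\bra{j}$'s, so the combined presence of $\bra{j}_h$ and $\bra{j}_t$ orthogonalises distinct values of $j$ across all other registers. Consequently, any unitary applied by a non-privileged party to $t$ alone (or, for $P_0$, $h$ alone), followed by a measurement, returns a uniformly random outcome and leaves no usable trace of the private inputs.

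First I would dispatch the direct measurement attack. The only private register held by $P_i$ is $e_i$, whose state is $\bra{f_i(j)}_{e_i}$, depending solely on $P_i$'s own $x_i$; measuring it merely yields $j_0\bmod x_i$ for a random $j_0$, which $P_i$ could already simulate. For the travelling register $t$ passing through $P_i$'s hands, the global state has the form $\sfrac{2^u}\sum_j\bra{j}_h\bra{j}_t\bra{\Phi_j}_e$ with mutually orthogonal $\bra{j}_h$'s, so tracing out the registers $P_i$ does not hold gives the maximally mixed marginal $I/2^u$ on $t$. Hence any measurement of $t$ only reveals a uniform $j\in\intset{2^u}$.

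Next I would address the pre-period-finding attack. Since the marginal on $t$ (respectively $h$, for $P_0$) is maximally mixed, any unitary applied to that register alone, including $\QFIT$, preserves this marginal, so subsequent measurement outcomes remain uniform. Conceptually, the period-finding interference used in \textbf{Algorithm~1} materialises only after the step 4(i) $\CNOT^{\otimes u}$ disentangles $t$ from $h$, which is executed exclusively by $P_0$ after all parties have contributed; no one can trigger the interference earlier without already holding both registers and violating the semi-honest model.

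The hard part, and the main obstacle I expect, is the post-period-finding attack. Here $P_i$ keeps a copy $c$ of $\bra{j}$ after $\CNOT$-ing from $t$ to an ancilla, so that by the end of step 3 the global state is $\sfrac{2^u}\sum_j\bra{j}_h\bra{j}_t\bra{f(j)}_e\bra{j}_c$. My plan is: (i) propagate through $P_0$'s steps 4(i)--(iv) and show that after $P_0$'s measurement of $h$ with outcome $\phi\approx 2^u l_0/T$, the conditional residual state on $(e,c)$ is proportional to $\sum_j \nexp{jl_0}{2^u}\bra{f(j)}_e\bra{j}_c$; (ii) trace out $e$, using the collision rule $f(j)=f(j')\Leftrightarrow j\equiv j'\pmod T$ from Theorem~\ref{theorem1} to reduce $\rho_c$ to a mixture supported on the period-$T$ cosets of $\intset{2^u}$; (iii) apply $\QFT$ on $c$ and verify that the outcome $k$ concentrates near $l_0+m\cdot 2^u/T$ for a random $m$. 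Such $k$ encode only the global period $T$, which is broadcast in step 7 as $T_1$ anyway. The delicate point inside (ii)--(iii) is to argue that the Fourier spectrum of $\rho_c$ depends on $(x_0,\ldots,x_{n-1})$ solely through $T=\setlcm{k\in\intset{n}}{x_k}$, so the copy cannot distinguish two input tuples sharing the same LCM. Once this coset argument is in place, the three attack analyses combine to give the theorem.
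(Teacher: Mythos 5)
Your proposal follows essentially the same route as the paper: the same three-case split, with uniformity of measurement outcomes on $h$, $t$, and $e_i$ disposing of the first two attacks (your maximally-mixed-marginal phrasing is a cleaner, more general packaging of the paper's explicit amplitude computations) and the same coset/Fourier calculation showing that the copied register reveals nothing beyond the period $T$ for the third. One small correction: once $P_i$ holds a copy $\bra{j}_c$, that copy acts as a which-path marker, so $P_0$'s measurement of $h$ in step 4 returns a uniformly random $k\in\intset{2^u}$ rather than $\phi\approx 2^u l_0/T$ (the paper states this explicitly); this does not affect your conclusion, since the conditional state on $(e,c)$ still has the form $\sum_{j}\nexp{jk}{2^u}\bra{f(j)}_e\bra{j}_c$ and your steps (ii)--(iii) go through unchanged.
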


\begin{proof}
We discuss the transformation of quantum states under the above three attacks one by one.
\begin{enumerate}[(1)]
\item {\textbf{Direct measurement attack} Before the QPA process is completed, the state is
\begin{equation}\label{eq1}
\sfrac{2^u}\sum_{j\in \intset{2^u}}\bra{j}_h\bra{j}_t\bra{f_i(j)}_{e_i}\bra{f'(j)}_{e'},
\end{equation}
where $f'$ is the connected function of those who have completed their operations, and $e'$ is the register of $\bra{f'(j)}$. The attacker $P_i$ has register $e_i$, $h$ (if $i=0$) or $t$ (if $i\ne 0$). Assume he measure $h$ or $t$, then he will get a random $j\in\intset{2^u}$, which gives no useful information. If $P_i$ measures $e_i$, for arbitrary value $Y\in\intset{x_i}$ of function $f_i(j)=j\mod{x_i}$, its corresponding solution set is $f^{-1}_i(Y)=\set{kx_i+Y\lvert kx_i+Y\in\intset{2^u},k\in\Z}$, so $\left\lvert f^{-1}_i(Y)\right\lvert\approx \frac{2^u}{x_i}$. Therefore, for each $Y$, its probability is about $\frac{2^u}{x_i}\frac{1}{2^u}=\frac{1}{x_i}$. This probability distribution has nothing to do with the privacy of other parties, so it does not give any effective information, too. Consider that the attack occurs after the QPA process is completed, now the state is
\begin{equation}
\bra{\widehat{f}(l)}_e=\sfrac{T}\sum_{k\in \intset{T}}\nexp{lk}{T}\bra{f_i(k)}_{e_i}\bra{f'(k)}_{e'},
\end{equation}
which is similar to (\ref{eq1}). The derivation similar to the above shows that the measurement of ${e_i}$ still cannot give any useful information. In summary, direct measurement attack will not take effect.
}
\item {\textbf{Pre-period-finding attack} Similar to (1), the state is
\begin{equation}
\sfrac{2^u}\sum_{j\in \intset{2^u}}\bra{j}_h\bra{j}_t\bra{f'(j)}_{e'},
\end{equation}
where $f'$ is the connected function of those who have completed their operations (including $P_i$ or not), and $e'$ is the register of $\bra{f'(j)}$. Now $P_i$ will apply $\QFIT$ on $h$ or $T$ he owns (without losing generality, we choose $h$). The new state is 
\begin{equation}\label{eq2}
\begin{aligned}
&\sfrac{2^u}\sum_{j\in \intset{2^u}}\bra{j}_h\bra{j}_t\bra{f'(j)}_{e'}\\
&\onarrow{\QFIT}\sfrac{2^u}\sum_{j\in \intset{2^u}}\sfrac{2^u}\sum_{k\in \intset{2^u}}\nexp{jk}{2^u}\bra{k}_h\bra{j}_t\bra{f'(j)}_{e'}\\
&=\sfrac{2^u}\sum_{k\in \intset{2^u}}\bra{k}_h\sfrac{2^u}\sum_{j\in \intset{2^u}}\nexp{jk}{2^u}\bra{j}_t\bra{f'(j)}_{e'}.
\end{aligned}
\end{equation}
Since 
\begin{equation}
\begin{aligned}
&\sfrac{2^u}\sum_{j\in \intset{2^u}}\nexp{jk}{2^u}\bra{j}_t\bra{f'(j)}_{e'}\\
&=\U_{f'}\sfrac{2^u}\sum_{j\in \intset{2^u}}\nexp{jk}{2^u}\bra{j}_t\bra{0}_{e'}\\
&=\U_{f'}\QFIT\bra{k}_t\bra{0}_{e'}
\end{aligned}
\end{equation}
is a unit vector, we know that if $P_i$ measures on $h$, then $\forall k\in\intset{2^u}$, its probability is $\frac{1}{2^u}$, which gives no useful information. Consequently,
pre-period-finding attack will not take effect, too.
}
\item {\textbf{Post-period-finding attack} $P_i$ copies a $\bra{j}$ when he get the register $t$. Similar to (2), after step 4-(3), the state is
\begin{equation}
\begin{aligned}
\sfrac{2^u}\sum_{k\in \intset{2^u}}\bra{k}_h\sfrac{2^u}\sum_{j\in \intset{2^u}}\nexp{jk}{2^u}\bra{j}_t\bra{f(j)}_{e}.
\end{aligned}
\end{equation}
After $P_0$ measures on $h$, the state collapses to
\begin{equation}
\begin{aligned}
\bra{k}_h\sfrac{2^u}\sum_{j\in \intset{2^u}}\nexp{jk}{2^u}\bra{j}_t\bra{f(j)}_{e},
\end{aligned}
\end{equation}
where $k\in\intset{2^u}$ is random selected with probability $\frac{1}{2^u}$. Now $P_i$ applies $\QFT$ on $t$, then the new state is
\begin{equation}
\begin{aligned}
&\sfrac{2^u}\sum_{j\in \intset{2^u}}\nexp{jk}{2^u}\bra{j}_t\bra{f(j)}_{e}\\
&\onarrow{\QFT}\sfrac{2^u}\sum_{j\in \intset{2^u}}\nexp{jk}{2^u}\sfrac{2^u}\sum_{l\in \intset{2^u}}\pexp{jl}{2^u}\bra{l}_t\bra{f(j)}_{e}\\
&=\sfrac{2^u}\sum_{l\in \intset{2^u}}\bra{l}_t\sfrac{2^u}\sum_{j\in \intset{2^u}}\pexp{j(l-k)}{2^u}\bra{f(j)}_{e}\\
&=\sfrac{2^u}\sum_{l\in \intset{2^u}}\bra{l}_t\sfrac{2^u}\sum_{Y\in \intset{T}}\sum_{j\in f^{-1}(Y)}\pexp{j(l-k)}{2^u}\bra{Y}_{e}\\
&=\sfrac{2^u}\sum_{l\in \intset{2^u}}\bra{l}_t\bra{\psi_{l,k}}_e,
\end{aligned}
\end{equation}
where $T=\setlcm{k\in \intset{n}}{x_k}$ is the period of $f$, and for convenience, the vector $\bra{\psi_{l,k}}$ represents $\sfrac{2^u}\sum_{Y\in \intset{T}}\sum_{j\in f^{-1}(Y)}\pexp{j(l-k)}{2^u}\bra{Y}$. $\forall Y\in\intset{T}$, $f^{-1}(Y)=\set{qT+Y\lvert qT+Y\in\intset{2^u},q\in\Z}$, then $\left\lvert f^{-1}(Y)\right\lvert\approx \frac{2^u}{T}$. Therefore,
\begin{equation}
\begin{aligned}
&\left\lvert\sum_{j\in f^{-1}(Y)}\pexp{j(l-k)}{2^u}\right\lvert\approx\left\lvert\sum_{q=0}^{\frac{2^u}{T}-1}\pexp{(qT+Y)(l-k)}{2^u}\right\lvert^2\\
&=\left\lvert\pexp{Y(l-k)}{2^u} \right\lvert^2\left\lvert\sum_{q=0}^{\frac{2^u}{T}-1}\pexp{(qT)(l-k)}{2^u}\right\lvert^2=\left\lvert\sum_{q=0}^{\frac{2^u}{T}-1}\pexp{T(l-k)q}{2^u}\right\lvert^2.
\end{aligned}
\end{equation}
If $l-k\equiv r\frac{2^u}{T}\mod{2^u},r\in\Z$, then
\begin{equation}
\begin{aligned}
&\left\lvert\sum_{q=0}^{\frac{2^u}{T}-1}\pexp{T(l-k)q}{2^u}\right\lvert^2=\left\lvert\sum_{q=0}^{\frac{2^u}{T}-1}\pexp{Tr\frac{2^u}{T}q}{2^u}\right\lvert^2=\left\lvert\sum_{q=0}^{\frac{2^u}{T}-1}e^{\imath2\pi rq}\right\lvert^2\\
&=\left\lvert\sum_{q=0}^{\frac{2^u}{T}-1}{1}\right\lvert^2=\left(\frac{2^u}{T}\right)^2,
\end{aligned}
\end{equation}
thus $\left\langle \psi_{l,k}\lvert \psi_{l,k}\right\rangle=\frac{1}{2^u}\times T\times\frac{2^2u}{T^2}=\frac{2^u}{T}$. Then for each $l=k+ r\frac{2^u}{T}\mod{2^u},r\in\Z$, its probability is $\frac{2^u}{T}\frac{1}{2^u}=\frac{1}{T}$. Note that there are exactly $T$ $l=k+ r\frac{2^u}{T}\mod{2^u}$ ($r=0,1,\cdots,T-1$), thus the probability that $l=k+ r\frac{2^u}{T}\mod{2^u},r\in\Z$ is $T\times \frac{1}{T}=1$. Therefore, if $P_i$ measure $t$, he will only get a random $l=k+ r\frac{2^u}{T}\mod{2^u},r\in\intset{T}$ with probability $\frac{1}{T}$. However, because $k\in\intset{2^u}$ is also random selected known only by $P_0$, he can't get any useful information more than $T$ itself. Therefore, post-period-finding attack will also not take effect.
}
\end{enumerate}
In summary, the above three attacks on \textbf{Protocol~3} will not give any useful information.
\end{proof}

In addition to the above security, the existence of the \textbf{Protocol~2} in step 7 is an additional insurance, because once someone hinders the normal operation of the protocol, the output of the protocol will always be wrong. Then, because the protocol runs repeatedly too many times, the abnormal operation will be found.

It is worth noting that the security under direct measurement attack is the rationality of Step 5 of \textbf{Protocol~3}, because this step is actually equivalent to direct measurement attack, so no one's information will be disclosed. On the other hand, after destroying the quantum state in step 5, the used qubits can be recycled, thus saving resource consumption.  

In summary, \textbf{Protocol~3} is secure enough.

\subsection{Complexity}\label{sec4.3}

It's obvious that the time and communication complexity of \textbf{Protocol~2} are equal to \textbf{Protocol~1}, i.e., $O(nm^2)$ and $O(nm)$ respectively. Since $m=\floor{\log_2{(nM)}}+1=O\left(\log{n}\right)$ in \textbf{Protocol~2} ($M$ is a constant), they are $O(n\log^2{n})$ and $O(n\log{n})$ respectively.

Now we analyze the complexity of \textbf{Protocol~3}. In a repetition of \textbf{Protocol~3}, $\H$ and $\CNOT$ are $O(u)$,  $\QFIT$ is $O(u^2)$, and $\U_{f_i}:\bra{j}\bra{0}\rightarrow \bra{j}\bra{j \mod{x_i}}$ is $O(u^2)$ as a modular division operation. Therefore, one repetition costs $O(n^3 m^2+n\log^2{n})$ times fundamental quantum operations and $O(n^2m+n\log{n})$ times single-qubit communication, since $u=O(nm)$. Since we should repeat $O(\log{(nm)})$ times, the total time and communication complexity of \textbf{Protocol~3} are $O(n^3 m^2\log(nm)+n\log^2{n}\log(nm))$ and $O(n^2m\log(nm)+n\log{n}\log(nm))$ respectively. If $n\ll 2^m$, i.e., $\log n\ll m$, then the complexity become $O(n^3 m^2\log(nm))$ and $O(n^2m\log(nm))$ respectively.

\section{Conclusion}\label{sec5}

In this paper, based on the property that the period of a function connected from several periodic functions is the least common multiple (LCM) of all original periods, we propose a protocol for computing the LCM based on Shor's quantum period-finding algorithm. To verify the correctness of the results of our LCM protocol, we transform Shi's quantum summation protocol into a quantum one-vote-down veto protocol. We prove that our protocol can compute LCM with polynomial time and communication complexity, and enough security. The main shortcomings of our work are as follows: (1) The proposed protocol is probabilistic, and it must be carried out for many times cooperating with voting verification to calculate the correct result. Repetition itself brings some security problems; (2) Although the security of accurate information is solved, for statistical information, we can only ensure that security probability is large enough, not $100\%$; (3) We only analyzed the security under the semi honest model, and did not consider malicious attacks such as collusion attacks.

Our research reveals the extraordinary potential of quantum computing. Quantum computing is a future oriented computing theory, and many existing computer axioms may be subverted by quantum computing. What we need to do is not only to resist the attacks of quantum computing, but also to use it to achieve higher computing efficiency and stronger security than classical computing, thus laying a solid foundation for modern network information security.

\section*{Acknowledgements}
\noindent
This work is supported by the National Natural Science Foundation of China (62071240), the Innovation Program for Quantum Science and Technology (2021Z

\noindent D0302900), and the Research Innovation Program for College Graduates of Jiangsu Province, China (KYCX23\_1370).
%
%
%
\bibliographystyle{unsrt}

\end{document}